\pgfplotsset{compat=newest}
\newtheorem{theorem}{Theorem}
\newtheorem{corollary}{Corollary}
\theoremstyle{definition}
\newcommand{\ul}{\underline}
\newcommand{\ol}{\overline}
\newcommand{\df}{\mathrm{d}}
\newcommand{\bdis}{\begin{displaymath}}
\newcommand{\edis}{\end{displaymath}}
\newcommand{\beq}{\begin{equation}}
\newcommand{\eeq}{\end{equation}}
\newcommand{\bea}{\begin{eqnarray*}}
\newcommand{\eea}{\end{eqnarray*}}
\newcommand{\bean}{\begin{eqnarray}}
\newcommand{\eean}{\end{eqnarray}}
\newcommand{\R}{\mathbb{R}}
\DeclareMathOperator*{\argmax}{arg\,max}
\newcommand{\1}{\mathbf{1}}
\begin{document}

\let\MakeUppercase\relax 

\title{Distributions of Posterior Quantiles via Matching}
\author[\uppercase{Kolotilin and Wolitzky}]{\larger \textsc{Anton Kolotilin and Alexander Wolitzky}}
\date{\today}
\thanks{\\
\textit{Kolotilin}: School of Economics, UNSW Business School. \\
\textit{Wolitzky}: Department of Economics, MIT}

\begin{abstract} 
We offer a simple analysis of the problem of choosing a statistical experiment to optimize the induced distribution of posterior medians, or more generally $q$-quantiles for any $q \in (0,1)$. We show that all implementable distributions of the posterior $q$-quantile are implemented by a single experiment, the \emph{$q$-quantile matching experiment}, which pools pairs of states across the $q$-quantile of the prior in a positively assortative manner, with weight $q$ on the lower state in each pair. A dense subset of implementable distributions of posterior $q$-quantiles can be uniquely implemented by perturbing the $q$-quantile matching experiment. A linear functional is optimized over distributions of posterior $q$-quantiles by taking the optimal selection from each set of $q$-quantiles induced by the $q$-quantile matching experiment. The $q$-quantile matching experiment is the only experiment that simultaneously implements all implementable distributions of the posterior $q$-quantile.\\

\bigskip

\noindent\emph{JEL\ Classification:}\ C61, D72, D82\newline

\noindent\emph{Keywords:} quantiles, statistical experiments, overconfidence, gerrymandering, persuasion

\end{abstract}

\maketitle
\thispagestyle{empty}

\let\MakeUppercase\relax %

\newpage
\clearpage
\pagenumbering{arabic} 

\section{Introduction}
\label{s:intro}

Several problems of recent economic interest amount to characterizing the set of distributions of posterior quantiles that can be induced by some statistical experiment, or to finding a distribution in this set that maximizes some objective. These problems include \emph{apparent overconfidence} \citep{BD}---e.g., what distributions of medians of individuals' beliefs about their own abilities are consistent with Bayesian updating?; \emph{partisan gerrymandering} (\citealt{FH}; \citealt{KW})---e.g., what is the highest distribution of district median voters attained by any districting plan?; and \emph{quantile persuasion} \citep{KW20}---e.g., what experiment maximizes the expected action of a receiver who minimizes the expected absolute deviation of her action from the unknown state of the world?\footnote{\citet{YangZentefis} explore these and other applications. \citet{KW} consider a more general gerrymandering model, which reduces to optimizing the distribution of posterior quantiles in a special case. \citet{KW20} introduce quantile persuasion as a special case of a more general persuasion model, which is further developed in \citet{KCW23}.
}

We provide a simple solution for this problem.  For any $q \in (0,1)$, there is a single experiment---the \emph{$q$-quantile matching experiment}---that simultaneously implements all implementable distributions of the posterior $q$-quantile.  For example, if the state is uniformly distributed on $[0,1]$ and the relevant quantile is the median, the $q$-quantile matching experiment is the \emph{median-matching experiment} that, whenever the true state is $\theta \in [0,1/2]$, reveals only that the state is either $\theta$ or $1/2+\theta$.\footnote{To our knowledge, the median-matching experiment first appears in \citet{KW20}. It is closely related to the median one-to-one matching introduced by \citet{KM} and further studied by \citet{LN}. }  In general, the $q$-quantile matching experiment pools pairs of states across the $q$-quantile of the prior in a positively assortative manner, with weight $q$ on the lower state in each pair.

\begin{figure}
\begin{tikzpicture}
	\begin{axis}		
		[axis x line = middle,
		axis y line = middle,
		xmin = 0, xmax = 9/8,
		ymin = 0, ymax = 9/8,
		xlabel=$x$,
		ylabel=$H$,
		xtick={1/128,1/4,1/2,3/4,1},
		xticklabels={$0$,$\frac{1}{4}$,$\frac{1}{2}$,$\frac{3}{4}$,$1$},
		ytick={1/128, 1/4,1/2, 3/4,1},
		yticklabels={$0$,$\frac{1}{4}$,$\frac{1}{2}$, $\frac{3}{4}$,$1$},
		clip=false]
		
		\draw [very thick, solid, red]		(0,0) -- (1/2, 1);
		\draw [very thick, solid, red]		(1/2,1) -- (1, 1);
		\draw [very thick, solid, blue]		(1/2,0) -- (1, 1);
		\draw [very thick, solid, blue]		(0,0) -- (1/2, 0);
		\draw [thick, solid, black]			(0, 0) -- (1, 1);
		
		
		\draw [thin, dotted, black]			(0,0) -- (1/2,0);
		\draw [thin, dotted, black]			(1/8,1/4) -- (5/8,1/4);
    	\draw [thin, dotted, black]			(1/4,1/2) -- (3/4,1/2);
    	\draw [thin, dotted, black]			(3/8,3/4) -- (7/8,3/4);
    	\draw [thin, dotted, black]			(1/2,1) -- (1,1);
    	\draw [very thin, loosely dashed, black]			(1/2,0) -- (1/2,1);
    	
    	\node [circle, fill = red, scale=0.3]		at (0, 0)	{};
    	\node [circle, fill = red, scale=0.3]		at (1/8, 1/4)	{};	
    	\node [circle, fill = red, scale=0.3]		at (1/4, 1/2)	{};	
    	\node [circle, fill = red, scale=0.3]		at (3/8, 3/4)	{};	
    	\node [circle, fill = red, scale=0.3]		at (1/2, 1)	{};	
    	
    	\node [circle, fill = blue, scale=0.3]		at (1/2, 0)	{};
    	\node [circle, fill = blue, scale=0.3]		at (5/8, 1/4)	{};	
    	\node [circle, fill = blue, scale=0.3]		at (3/4, 1/2)	{};	
    	\node [circle, fill = blue, scale=0.3]		at (7/8, 3/4)	{};	
    	\node [circle, fill = blue, scale=0.3]		at (1, 1)	{};	
    	
    	\node [black] at (6/16,7/16) {$F$};
    	\node [red] at (3/16,1/2) {$\ol H$};
    	\node [blue] at (13/16,1/2) {$\ul H$};
    	   	
%

\end{axis}
\end{tikzpicture}
\caption{Implementable Distributions of Posterior Medians}
\caption*{\emph{Notes:} When the prior $F$ is uniform on $[0,1]$, $\ol H$ and $\ul H$ are the lowest and highest implementable distributions of posterior medians. A distribution $H$ is implementable iff $\ul H \leq H \leq \ol H$. 
Optimizing a linear functional over distributions of posterior medians requires taking the optimal selection from each horizontal dotted line. For example, the blue (red) dots are the optimal selections for an increasing (decreasing) objective function.}
\label{f:DP}
\end{figure}

To see why the $q$-quantile matching experiment implements all implementable distributions of the posterior $q$-quantile, consider again the median-matching experiment with a uniform state.  When the experiment reveals that the state is $\theta$ or $1/2+\theta$ with equal probability, every value $x\in [\theta,1/2+\theta]$ is a posterior median.  In particular, the median-matching experiment simultaneously implements (i) the distribution $\ul H(x)=\max \{0,2x-1\}$, (ii) the distribution $\ol H(x)=\min \{2x, 1\}$, and (iii) every distribution $H$ satisfying $\ul H \leq H \leq \ol H$.  Conversely, simple Markov-type inequalities imply that every implementable distribution is bounded by $\ul H$ and $\ol H$.  Moreover, the set of implementable distributions of unique posterior quantiles is essentially the same: any desired selection from each set of $q$-quantiles induced by the $q$-quantile matching experiment can be uniquely selected by assigning it an extra $\varepsilon$ probability.  Finally, optimizing a linear functional over distributions of posterior quantiles simply requires taking the optimal selection from each set of $q$-quantiles induced by the $q$-quantile matching experiment.  See Figure 1 for an illustration of our results.\footnote{Similar figures in the literature include Figure 1 of \citet{OG}, Figure 2 of \citet{KG}, and Figure 3 of \citet{YangZentefis}.}

We also show that the $q$-quantile matching experiment is the \emph{unique} experiment that implements all implementable distributions of the posterior $q$-quantile. To see why, consider again a uniform state, and compare the median-matching experiment with the \emph{negative assortative matching} experiment that, whenever the true state is $\theta \in [0,1/2]$, reveals only that the state is either $\theta$ or $1-\theta$. The negative assortative matching experiment simultaneously implements the lowest and highest distributions of the posterior median, $\ol H$ and $\ul H$, but it does not implement all intermediate distributions, such as the distribution $H_{1/2}$ given by $H_{1/2}(x)=\ol H(x)$ for $x<1/4$, $H_{1/2}(x)=1/2$ for $x \in [1/4,3/4)$, and $H_{1/2}(x)=\ul H(x)$ for $x\geq 3/4$. Indeed, the negative assortative matching experiment induces posteriors with medians between $1/4$ and $3/4$ when the true state lies between $1/4$ and $3/4$, while $H_{1/2}$ assigns probability $0$ to these medians.

The current paper is closely related to \citet{BD} and \citet{YangZentefis}. Both of these papers establish results that are very similar to our Theorem \ref{t:implementable} (albeit \citeauthor{BD} do so for discrete experiments with finitely many induced posteriors). Our main contribution is to introduce the $q$-quantile matching experiment, which yields a much simpler proof of Theorem \ref{t:implementable}, as well as new results (Theorems \ref{t:opt} and \ref{t:uniqueness}).

\section{Implementable Distributions of Posterior Quantiles}
\label{s:implementable}

This section shows that the $q$-quantile matching experiment implements all implementable distributions of the posterior $q$-quantile.

Let $\Theta=[\ul \theta,\ol \theta]\subset \R $, with $\ul \theta<\ol \theta$, be a compact state space; let $C(\Theta)$ be the set of continuous functions on $\Theta$; let $\Delta(\Theta)$ be the set of cumulative distribution functions on $\Theta$, endowed with the weak$^\star$ topology, and let $\Delta(\Delta(\Theta))$ be the set of probability measures on $\Delta(\Theta)$. Recall that $G\in \Delta(\Theta)$ is a non-decreasing, right-continuous function satisfying $G(\ul \theta)\geq 0$ and $G(\ol \theta)=1$. Let $\delta_x$, with $x\in \Theta$, denote the degenerate distribution at $x$, so that $\delta_x(\theta)=\1\{\theta\geq x\}$.

Fix a prior distribution $F\in \Delta(\Theta)$ and a quantile of interest $q\in (0,1)$. Following \citet{KG}, define an \emph{experiment} as a distribution $\tau\in \Delta(\Delta (\Theta))$ of posterior distributions $G \in \Delta(\Theta)$ such that $\int G \df \tau (G)=F$. For each posterior $G$, define the set of \emph{$q$-quantiles} of $G$ as 
\[
X(G)=\{x\in \Theta :G(x^-)\leq q \leq G(x)\},
\]
where $G(x^-)$ denotes the left limit $\lim_{\theta \uparrow x} G(\theta)$, with the convention $G(\ul \theta^-)=0$.
In addition, for each $G$, define its generalized inverse $G^{-1}$ as
\[
G^{-1}(p)=\inf \{\theta \in \Theta: G(\theta)\geq p\}, \quad \text{for all $p\in [0,1]$}.
\]
That is, $G^{-1}(p)$ is the smallest $p$-quantile of $G$. 

To define the $q$-quantile matching experiment, let $\omega$ be uniformly distributed on $[0,1]$, and for each $\omega \in [0,q]$, let $G=G_\omega$ be the distribution that assigns probability $q$ to $F^{-1}(\omega)$ and assigns probability $1-q$ to $F^{-1}(q +(1-q)\omega/q)$.  
The \emph{$q$-quantile matching experiment} is defined as an experiment $\tau^{\star}$ such that for $\tau^{\star}$-almost all $G$, there exists $\omega \in [0,q]$ such that $G=G_\omega$.\footnote{For example, when $F$ is atomless, we can let $\omega = F(\theta)$, so that the $q$-quantile matching experiment induces posteriors that assign probability $q$ to $\theta$ and assign probability $1-q$ to $F^{-1}(q+ (1-q)F(\theta)/q)$, for $\theta \in [0,F^{-1}(q)]$.} Formally, $\tau^{\star}$ is defined by
\[
\tau^{\star}(M)=\int_0^q \1 \{q\delta_{F^{-1}(\omega)}+(1-q)\delta_{F^{-1}(q+\frac{1-q}{q}\omega)}\in M\}\frac{\df \omega}{q},\quad \text{for all $M\subset \Delta(\Theta)$}.
\]

While all of our results hold for general $F$ and $q$, we will provide intuition only for the uniform-median case where $F$ is uniform on $[0,1]$ and $q=1/2$. The uniform-median case is essentially without loss of generality, up to a change of variables, because the distribution function of $\theta=F^{-1}(\omega)$ is $F$ if $\omega$ is uniformly distributed on $[0,1]$.

A distribution $H$ of $q$-quantiles is \emph{implemented} by an experiment $\tau$ if there exists a (measurable) selection $\chi$ from $X$ such that the distribution of $\chi(G)$ induced by $\tau$ is $H$. A distribution $H$ of $q$-quantiles is \emph{uniquely implemented} by an experiment $\tau$ if $H$ is implemented by $\tau$ and  $X(G)$ is a singleton for $\tau$-almost all $G$. 
Let $\mathcal H$ and $\mathcal H^\star$ be the sets of implementable and uniquely implementable distributions of $q$-quantiles.

The following theorem characterizes $\mathcal H$ and $\mathcal H^\star$.

\begin{theorem}\label{t:implementable} The following hold:
\begin{enumerate}
\item $\mathcal H=\{H\in \Delta(\Theta):\ul H\leq H\leq \ol H\}$, where $\ul H(x)=\max \{0,{(F(x)-q)}/{(1-q)} \}$ and $\ol H(x)=\min \{{F(x)}/{q}, 1\}$ for all $x\in \Theta$. 
\item Every $H\in \mathcal H$ is implemented by $\tau^\star$.
\item If $F$ has a positive density on $\Theta$ then $\mathcal H$ is the closure of $\mathcal H^\star$. In particular, for any objective function $V\in C(\Theta)$, we have
\begin{equation}\label{e:sup=max}
\sup_{H\in \mathcal H^\star} \int_\Theta V(x)\df H(x)=\max_{H\in \mathcal H} \int_\Theta V(x)\df H(x).
\end{equation}
\end{enumerate}
\end{theorem}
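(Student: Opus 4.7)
The plan is to derive the bounds in Part 1 directly from the definition of a $q$-quantile, obtain Part 1 sufficiency and Part 2 by exhibiting an explicit selection rule inside $\tau^\star$, and handle Part 3 by a perturbation construction followed by a weak$^\star$ compactness/continuity step.

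For the necessity direction of Part 1, I would fix any implementing pair $(\tau, \chi)$ for $H$ and exploit the two halves of the $q$-quantile definition: on the event $\{\chi(G) \leq x\}$ one has $G(x) \geq G(\chi(G)) \geq q$ by monotonicity, and on $\{\chi(G) > x\}$ one has $G(x) \leq G(\chi(G)^-) \leq q$. Integrating $F(x) = \int G(x) \df \tau(G)$ over these two events yields $F(x) \geq q H(x)$ and $F(x) \leq H(x) + q(1-H(x))$, which together with $H \in [0,1]$ give exactly $\underline{H} \leq H \leq \overline{H}$. For the sufficiency direction and Part 2, the key observation is that each posterior $G_\omega$ under $\tau^\star$ equals $q$ on $[F^{-1}(\omega), F^{-1}(q+(1-q)\omega/q))$, so $X(G_\omega) = [F^{-1}(\omega), F^{-1}(q+(1-q)\omega/q)]$. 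I would propose the selection $\chi(G_\omega) = H^{-1}(\omega/q)$. To verify $\chi(G_\omega) \in X(G_\omega)$, note that for any $\theta < F^{-1}(\omega)$ we have $F(\theta) < \omega$, so $H \leq \overline{H}$ gives $H(\theta) \leq F(\theta)/q < \omega/q$ and hence $H^{-1}(\omega/q) \geq F^{-1}(\omega)$; the mirror argument using $H \geq \underline{H}$ evaluated at $F^{-1}(q+(1-q)\omega/q)$ yields $H^{-1}(\omega/q) \leq F^{-1}(q+(1-q)\omega/q)$. Since $\omega/q$ is uniform on $[0,1]$ when $\omega$ is uniform on $[0,q]$, the quantile transform gives that the induced distribution of $\chi(G_\omega)$ is exactly $H$.

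For Part 3, I would establish density and deduce the sup-equals-max claim separately. The inclusion $\overline{\mathcal{H}^\star} \subseteq \mathcal{H}$ is automatic because $\mathcal{H}$ is weak$^\star$ closed: the pointwise inequalities $\underline{H} \leq H \leq \overline{H}$ pass to weak$^\star$ limits at continuity points of the limit, and then extend to all of $\Theta$ by right-continuity. For the reverse inclusion I would construct, for each $H \in \mathcal{H}$ and $\varepsilon > 0$, an experiment $\tau^\varepsilon$ that uniquely implements some $H^\varepsilon$ within $\varepsilon$ of $H$. The construction perturbs $\tau^\star$: for each $\omega \in [0,q]$, shave mass $\alpha(\omega)\varepsilon$ off the atom at $F^{-1}(\omega)$ and mass $\beta(\omega)\varepsilon$ off the atom at $F^{-1}(q+(1-q)\omega/q)$, and redistribute the removed mass as a uniform density on a short interval centred on $\chi(G_\omega) = H^{-1}(\omega/q)$; the splits $\alpha, \beta$ are chosen so that (i) the perturbed CDF crosses level $q$ at a unique point near $\chi(G_\omega)$, making the $q$-quantile unique, and (ii) averaging the perturbation over $\omega$ leaves the prior $F$ unchanged. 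Once density is in hand, the sup-equals-max claim is soft: $\mathcal{H}$ is weak$^\star$-compact as a closed subset of the weak$^\star$-compact $\Delta(\Theta)$, and $H \mapsto \int V \df H$ is weak$^\star$-continuous for $V \in C(\Theta)$, so the supremum over $\mathcal{H}$ is attained, and density of $\mathcal{H}^\star$ forces the two optima to coincide.

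The main technical obstacle is item (ii) above: the Bayes-plausibility condition ties $\alpha(\omega)$ and $\beta(\omega)$ together through an integral equation involving $F$ and $H$, and this is precisely where the positive-density hypothesis on $F$ enters. Intuitively, spreading the atoms of $\tau^\star$ into short intervals around $\chi(G_\omega)$ requires that the mass at any interior target state can be sourced from the nearby atoms $F^{-1}(\omega)$ and $F^{-1}(q+(1-q)\omega/q)$; positive density of $F$ makes $F^{-1}$ continuous and the atoms vary continuously with $\omega$, which is what allows the perturbation equations to admit a non-negative solution of size $O(\varepsilon)$.
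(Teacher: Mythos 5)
Your arguments for Parts 1 and 2 are essentially identical to the paper's: the necessity direction uses the same split on $\{\chi(G)\leq x\}$ vs.\ $\{\chi(G)>x\}$ (the paper writes this as the bound $\1\{G(x)\geq q\}\geq\1\{\chi(G)\leq x\}$, but the content is the same), and the sufficiency direction uses the same selection $\chi(G_\omega)=H^{-1}(\omega/q)$ with the same verification that it lies in $X(G_\omega)=[F^{-1}(\omega),F^{-1}(q+\tfrac{1-q}{q}\omega)]$ and the same quantile-transform argument. One small omission: the paper also checks that $\tau^\star$ is Bayes-plausible (i.e.\ $\int G\,\df\tau^\star(G)=F$), which you take for granted; it is an easy computation but should be stated.

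For Part 3, however, your plan diverges from the paper's and leaves a genuine gap. You propose to perturb each posterior $G_\omega$ by shaving mass $\alpha(\omega)\varepsilon$ and $\beta(\omega)\varepsilon$ off its two atoms and spreading the removed mass as a uniform density on a short interval around $\chi(G_\omega)=H^{-1}(\omega/q)$, choosing $\alpha,\beta$ so that Bayes plausibility holds. You yourself identify the resulting constraint as the ``main technical obstacle,'' but you do not show it has a non-negative solution, and this is not a routine matter: in the uniform-median case it reduces to a coupled pair of functional equations of the schematic form $2\alpha(s)=\bigl(\alpha(H(s)/2)+\beta(H(s)/2)\bigr)h(s)$ for $s\in[0,1/2]$ and the mirror equation for $s\in(1/2,1]$, whose solvability in non-negative functions is not apparent. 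Moreover, the entire construction tacitly assumes $H$ has a density $h$; for a general $H\in\mathcal H$ (e.g.\ $H=\delta_{F^{-1}(q)}$) the target $\chi(G_\omega)$ is constant in $\omega$, the intervals around it all overlap, and the mass-balance bookkeeping breaks down, so a preliminary smoothing step is needed and you do not supply one. The paper avoids both difficulties by a cleaner device: first approximate $H$ by a piecewise-$F$-linear $H_n$ (so a bona fide density $h_n$ with respect to $F$ exists), then \emph{reparametrize the experiment by the target quantile $x$} and take each posterior $G^x$ to be a convex combination of the two-point matching distribution at $\omega=qH_n(x)$ and a Dirac atom $\delta_x$, with mixing weights $\bigl((1-e)h_n(x),e\bigr)$ after normalization. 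The atom at $x$ forces $X(G^x)=\{x\}$ immediately, the selection $\chi(G^x)=x$ is deterministic, Bayes plausibility becomes a one-line change of variables, and the induced distribution of unique quantiles is exactly $(1-e)H_n+eF$. Your spreading-into-intervals idea is plausible in spirit, but as written it is not a proof; if you want to salvage it you would need to (a) insert the $H_n$-type approximation step, and (b) actually exhibit a non-negative solution to the mass-balance equations, which the Dirac-atom construction sidesteps entirely.
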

Figure 1 illustrates the set $\mathcal H$. The intuition for Theorem \ref{t:implementable} is straightforward.
First, by simple Markov-type inequalities, any implementable $H$ must satisfy $\ul H \leq H\leq \ol H$. For example, if the posterior median is less than $x$ with probability $p$, then $\theta$ must be less than $x$ with probability at least $p/2$. When $F(x)=x$, this implies that $p \leq 2x$, so the probability that the posterior median is less than $x$ is at most $\min \{2x,1\}= \ol H(x)$.\footnote{This argument is closely related to \citeauthor{KG}'s \citeyear{KG} ``prosecutor-judge'' example. As in their example, the key observation is that if the prior probability of an event (e.g., the event that $\theta\leq x$) is $x$, then the maximum probability that the posterior probability of this event is at least $1/2$ is $\min \{2x,1\}$.
}

Conversely, to see that any $H$ satisfying $\ul H \leq H\leq \ol H$ is implementable, consider the median-matching experiment $\tau^{\star}$ that induces only posteriors $G_\theta$ that assign equal probability to some $\theta \in [0,1/2]$ and to $1/2 + \theta$. The set of medians of such a posterior is $X(G_{\theta})= [\theta, 1/2+ \theta]$. At the same time, $H\leq \ol H$ implies that $H^{-1}(2 \theta)\geq \theta$, and $H\geq \ul H$ implies that $H^{-1}(2 \theta)\leq 1/2+\theta$, so we have $H^{-1}(2 \theta) \in [\theta, 1/2 +\theta]$. Thus, $\chi (G_{\theta})=H^{-1}(2 \theta)$ is a selection from $X(G_{\theta})$. Finally, the distribution of $\chi (G_{\theta})$ induced by $\tau^{\star}$ is $H$, because the states that induce medians below $x$ under $\tau^{\star}$ with selection $\chi (G_{\theta})$ are precisely those in $[0,H(x)/2]$ and $[1/2,1/2+H(x)/2]$, and the measure of these states is $H(x)$.

Finally, for unique implementation, for any $e\in (0,1]$ and any implementable and absolutely continuous $H$ with density $h$, we can explicitly construct a modification of the median-matching experiment $\tau^{\star}_e$ that uniquely implements the distribution $(1-e)H+e F$ of medians, by making every posterior $G_\theta$ a convex combination of the median matching distribution $(\delta_{\theta}+\delta_{1/2+\theta})/2$ and the degenerate distribution $\delta_{H^{-1}(2\theta)}$ at the unique median $H^{-1}(2\theta)\in [\theta,1/2+\theta]$. 
Intuitively, for each $\theta\in [0,1/2]$, $\tau^\star_e$ induces posteriors $G_{\theta}$ and $G_{H(\theta)/2}$ with probabilities $1-e$ and $e$; similarly, for each $\theta\in (1/2,1]$, $\tau^\star_e$ induces posteriors $G_{\theta-1/2}$ and $G_{H(\theta)/2}$ with probabilities $1-e$ and $e$. Then posterior medians in $[x,x+\df x]$ are induced at $\theta\in [{H(x)}/{2},{H(x+\df x)}/{2}]$ with probability $1-e$, at $\theta\in [1/2 +H(x)/2,{1/2+H(x+\df x)/2}]$ with probability $1-e$, and at $\theta \in [x,x+\df x]$ with probability $e$. Since $H(x+\df x)=H(x)+h(x)\df x$, the density of the posterior median $x$ multiplied by the posterior at $x$ is equal to $(1-e)h(x)({\delta_{{H(x)}/{2}}}+{\delta_{{1}/{2}+{H(x)}/{2}}})/2+e\delta_x$,
as required. 
To complete the proof of Theorem \ref{t:implementable}, we provide a simple argument showing that any distribution in $\mathcal H$ can be approximated by uniquely implementable distributions $(1-e)H+e F$.\footnote{A complete characterization of the set $\mathcal H^\star$ remains an open problem. Two observations are that $\mathcal H^\star$ is a proper subset of $\mathcal H$ (as $\ul H$ and $\ol H$ do not belong to $\mathcal H^\star$), and that not all uniquely implementable distributions can be implemented by our modification of $q$-quantile matching. For example, $H=\delta_{1/2}$ is uniquely implemented by complete pooling but not by our modification of median matching.}

The literature contains several close antecedents of Theorem \ref{t:implementable}. \citet{FH} study partisan gerrymandering with a finite number of legislative districts. \citet{BD} study testing for overconfidence with a finite number of bins in a self-ranking experiment. In our notation, \citeauthor{FH} and \citeauthor{BD} consider discrete experiments with finitely many induced posteriors. \citeauthor{FH} show that a discrete version of $\ul H$ is the highest implementable distribution of posterior medians. \citeauthor{BD} show that the set of uniquely implementable distributions of posterior medians is a discrete version of the set $\{H\in \Delta(\Theta):\ul H< H< \ol H\}$. In a general setting with possibly infinitely many induced posteriors in the contexts of quantile persuasion and partisan gerrymandering, respectively, \citet{KW20} and \citet{KW} show that $\ul H$ is the highest implementable distribution of posterior medians. Finally, in a general setting, \citet{YangZentefis} show that the set of implementable distributions of posterior medians is $\{H\in \Delta(\Theta):\ul H\leq H\leq \ol H\}$, and also construct a dense subset of distributions that are uniquely implementable.\footnote{To establish results similar to our Theorem \ref{t:implementable}, \citeauthor{YangZentefis} characterize the extreme points of the set $\{H\in \Delta(\Theta):\ul H\leq H\leq \ol H\}$. As recently emphasized by \citet{KMS}, characterizing a convex set by its extreme points can be useful for establishing a given property of the set. In contrast, we show that directly characterizing the set of implementable distributions of posterior quantiles is much easier than characterizing the extreme points of this set.} Relative to \citeauthor{BD} and \citeauthor{YangZentefis}, Theorem \ref{t:implementable} shows that the $q$-quantile matching experiment implements every $H\in \mathcal H$, and also yields a much simpler proof.  


\section{Optimal Distributions of Posterior Quantiles}
\label{s:optimal}

This section uses the $q$-quantile matching experiment to characterize the distributions of posterior $q$-quantiles that optimize a continuous linear functional.

\begin{theorem}\label{t:opt}
Let $V\in C(\Theta)$. Then $H$ (uniquely) maximizes $\int V(x)\df H(x)$ on $\mathcal H$ iff $H^{-1}(p)$ (uniquely) maximizes $V$ on $[\ol H^{-1}(p),\ul H^{-1}(p)]$ for (almost) all $p\in [0,1]$. Consequently, the value of the maximization problem is
\begin{equation}\label{e:opt}
\max_{H\in \mathcal H} \int_\Theta V(x)\df H(x)=
\int_0^1 \max\{V(x):\, x\in [\ol H^{-1}(p),\ul H^{-1}(p)]\}\df p.	
\end{equation}
\end{theorem}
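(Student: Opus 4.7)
The plan is to reduce the maximization over $\mathcal H$ to a pointwise maximization using the quantile function. Starting from Theorem \ref{t:implementable}, which gives $\mathcal H=\{H:\ul H\leq H\leq \ol H\}$, and using that the generalized inverse reverses pointwise order, I would first rewrite membership in $\mathcal H$ as the condition $\ol H^{-1}(p)\leq H^{-1}(p)\leq \ul H^{-1}(p)$ for all $p\in [0,1]$; write $I(p)=[\ol H^{-1}(p),\ul H^{-1}(p)]$. Combined with the standard identity $\int_\Theta V(x)\df H(x)=\int_0^1 V(H^{-1}(p))\df p$, this immediately yields
\[
\int_\Theta V(x)\df H(x)\leq \int_0^1 \max_{x\in I(p)} V(x)\df p
\]
for every $H\in \mathcal H$, with equality iff $H^{-1}(p)\in \argmax_{x\in I(p)} V(x)$ for almost every $p$.

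Next I would prove attainment using the $q$-quantile matching experiment $\tau^\star$. A direct check from the definitions of $\ul H$, $\ol H$, and $G_\omega$ shows that the set of $q$-quantiles of the posterior $G_{pq}$ coincides with $I(p)$, so by Theorem \ref{t:implementable} any measurable selection from $I(\cdot)$ applied via $\tau^\star$ produces an implementable distribution in $\mathcal H$. A standard measurable selection theorem supplies a measurable $\phi^\star(p)\in \argmax_{x\in I(p)} V(x)$, and the induced distribution $H^\star\in \mathcal H$ satisfies $\int V\df H^\star=\int_0^1 V(\phi^\star(p))\df p=\int_0^1 \max_{x\in I(p)} V(x)\df p$, establishing \eqref{e:opt}. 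The ``if'' direction of the characterization is immediate from the upper bound, and the ``only if'' direction follows by extending the a.e. equality $V(H^{-1}(p))=\max_{x\in I(p)} V(x)$ to all $p$ via right-continuity of both $V\circ H^{-1}$ and $p\mapsto \max_{x\in I(p)} V(x)$ (the latter using that both endpoints of $I(\cdot)$ are right-continuous and $V$ is continuous).

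For the uniqueness part, if $\argmax_{x\in I(p)} V(x)$ is a singleton for a.e.\ $p$, then $H^{-1}$ is pinned down a.e.\ and the maximizing $H$ is unique. Conversely, if $\argmax_{x\in I(p)} V(x)$ is multi-valued on a set of positive measure, I would produce two distinct maximizers by taking, respectively, the pointwise-smallest and pointwise-largest maximizer selections from $\argmax_{x\in I(p)} V(x)$. Both yield elements of $\mathcal H$ provided each is non-decreasing in $p$, which is the main point to verify.

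The main obstacle is precisely this monotonicity of the extremal selections. A short argument by contradiction should suffice: if $p_1<p_2$ and, say, the pointwise-smallest maximizer satisfies $\phi^-(p_2)<\phi^-(p_1)$, then monotonicity of the endpoints $\ol H^{-1}$ and $\ul H^{-1}$ forces both $\phi^-(p_2)\in I(p_1)$ and $\phi^-(p_1)\in I(p_2)$; comparing $V$-values on the two intervals yields $V(\phi^-(p_1))=V(\phi^-(p_2))$, contradicting that $\phi^-(p_1)$ is the smallest maximizer on $I(p_1)$. The symmetric argument handles the largest-maximizer selection. Everything else---the change of variable, the measurable selection, and the implementation via $\tau^\star$---is routine.
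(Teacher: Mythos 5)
Your proposal follows essentially the same strategy as the paper: change variables to quantile functions, reduce to pointwise maximization over the intervals $I(p)=[\ol H^{-1}(p),\ul H^{-1}(p)]$, and for (non)uniqueness appeal to the extremal (smallest/largest) maximizer selections and their monotonicity, for which your exchange argument is correct and is the content the paper relies on implicitly. The one substantive difference is that you establish attainment by invoking a measurable selection theorem and then routing through the $q$-quantile matching experiment $\tau^\star$, whereas the paper avoids any selection theorem by simply observing that $J^\star(p)=\min\argmax_{x\in I(p)}V(x)$ is non-decreasing (hence automatically measurable), left-continuous with the right boundary values, and therefore \emph{is} a quantile function $H^{\star-1}$ with $H^\star\in\mathcal H$; this is a bit cleaner and keeps Theorem~2's proof self-contained.

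Two small technical slips worth fixing. First, you invoke \emph{right}-continuity of $V\circ H^{-1}$ and of $p\mapsto\max_{x\in I(p)}V(x)$ to pass from a.e.\ maximality to everywhere maximality; with the paper's convention $G^{-1}(p)=\inf\{\theta:G(\theta)\geq p\}$ these maps are \emph{left}-continuous, not right-continuous, so you should approach $p_0$ from below, as the paper does. Second, the pointwise-largest maximizer selection $\hat J$, while non-decreasing, need not be left-continuous and so need not be a quantile function directly; the paper replaces it with $\hat J^\star(p)=\hat J(p^-)$ (which agrees with $\hat J$ off a countable set, preserves the objective, and still differs from $J^\star$ on a positive-measure set). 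Your write-up should include that adjustment before asserting that the largest selection yields a distinct element of $\mathcal H$.
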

Conceptually, Theorem \ref{t:opt} follows easily from Theorem \ref{t:implementable}.  
Since the median matching experiment $\tau^{\star}$ implements all implementable distributions of medians, optimization just requires selecting an optimal median $\chi (G_\theta) \in \argmax_{x\in [\theta, 1/2+\theta]}V(x)$ for each posterior $G_\theta$ induced by $\tau^{\star}$, as illustrated in Figure 1.
The value of the maximization problem is thus $2\int_{0}^{1/2} \max_{x\in [\theta, 1/2+\theta]} V(x)\df \theta$, and a distribution $H$ of medians is optimal iff $H^{-1}(2\theta) \in \argmax_{x\in [\theta, 1/2+\theta]}V(x)$ for all $\theta \in [0,1/2]$. That is, optimal solutions can be obtained by pointwise maximization without any ironing procedure.  

In general, by Theorem \ref{t:implementable}, for each $H\in \mathcal H$ and $p\in [0,1]$, we have $\ol H^{-1}(p)\leq H^{-1}(p) \leq \ul H^{-1}(p)$. If we consider the relaxed problem of finding a measurable function $J:[0,1] \rightarrow \Theta $ to
\begin{gather*}
\text{maximize }\int_0^1 V(J(p))\df p\\
\text{subject to }\ol H^{-1}(p)\leq J(p) \leq \ul H^{-1}(p),\quad \text{for all }p\in [0,1],
\end{gather*}
one solution is
\[
J^\star(p)=\min \argmax_{}\{V(x):\, x\in [\ol H^{-1}(p),\ul H^{-1}(p)]\}, \quad \text{for all }p\in [0,1].
\]
This function $J^\star$ is monotone; moreover, the proof of Theorem \ref{t:opt} shows that there exists $H^\star \in \Delta(\Theta)$ such that $J^\star = H^{\star -1}$, so $H^\star$ solves the optimization problem  \eqref{e:opt}.

The closest antecedent to Theorem \ref{t:opt} is Corollary 4 of \citet{YangZentefis}, which solves the maximization problem \eqref{e:opt} in the special cases where $V$ is quasi-concave or quasi-convex. The solution follows immediately from Theorem \ref{t:opt}. 
To see how, suppose that $V$ is quasi-concave with a maximum at $x^\star\in [0,1]$. For each interval $[\theta,1/2+\theta]$, it is optimal to select $x^\star$ if $x^\star \in [\theta,1/2+\theta]$, $\theta$ if $x^\star<\theta$, and $1/2+\theta$ if $x^\star>1/2+\theta$. This induces the distribution of posterior medians
\begin{align}\label{e:concave}
H(x)=
\begin{cases}
\ul H(x), &x<x^\star,\\
\ol H(x), &x\geq x^\star.
\end{cases}
\end{align}
Next, suppose that $V$ is quasi-convex with $V(x^\star)=V(1/2+x^\star)$ for some $x^\star \in  [0,1/2]$. Then, for each interval $[\theta,1/2+\theta]$, it is optimal to select $\theta$ if $x^\star>\theta$ and $1/2+\theta$ if $x^\star <\theta$. This induces the distribution of posterior medians
\begin{align}\label{e:convex}
H(x)=
\begin{cases}
\ol H(x), &x<x^\star,\\
2x^\star, &x\in [x^\star,\frac 12+x^\star),\\
\ul H(x), &x\geq \frac 12+x^\star.
\end{cases}
\end{align}

From the perspective of optimization, it is natural to ask whether each extreme point of $\mathcal H$ is exposed, meaning that it is the unique maximizer in $\mathcal H$ of $\int V(x)\df H(x)$ for some $V\in C(\Theta)$. It turns out that some extreme points are not exposed. 
To see this, note that in the uniform-median case the distribution $H^\star=(\delta_{1/4}+\delta_{1/2})/2$ is an extreme point of $\mathcal H$, as there are no distinct $H_1,H_2\in \mathcal H$ such that $H^\star=(H_1+H_2)/2$. By Theorem \ref{t:opt}, if $H^\star$ maximizes $\int V(x)\df H(x)$ on $\mathcal H$ for some $V\in C(\Theta)$, then $V(1/4)\geq V(x)$ for all $x\in [\theta,1/2+\theta]$ and all $\theta\in [0, 1/4]$, and, similarly, $V(1/2)\geq V(x)$ for all $x\in [\theta,1/2+\theta]$ and all $\theta\in [0, 1/2]$. Thus, $V(1/4)=V(1/2)\geq V(x)$ for all $x\in [0,1]$. But then the distribution $\delta_{1/2}\in \mathcal H$ also maximizes $\int V(x)\df H(x)$, which shows that $H^\star$ is not an exposed point of $\mathcal H$.\footnote{The distribution $H^ \star$ does uniquely maximize $\int V (x)\df H(x)$ for $V = 2\cdot\1\{x = 1/4\}+\1 \{x = 1/2\}$, which is upper semi-continuous but not continuous. An open question is whether each extreme point of $\mathcal H$, characterized in Theorem 1 of \citet{YangZentefis}, is the unique maximizer of $\int V (x)\df H(x)$ for some upper-semicontinuous $V$. This is a weaker property than exposedness, as the usual theory of exposed points relies on continuity.}

\section{Unique Properties of the Quantile Matching Experiment}
\label{s:unique}

Theorem \ref{t:implementable} shows that the $q$-quantile matching experiment simultaneously implements all implementable distributions of posterior $q$-quantiles. We now show that it is the unique experiment to do so. For simplicity, in this section we assume that $F$ has a positive density on $\Theta$.

We actually establish the stronger result that the $q$-quantile matching experiment is the unique experiment that simultaneously implements all optimal distributions for strictly quasi-convex objective functions: that is, all distributions of the form of equation \eqref{e:convex}.

\begin{theorem}\label{t:uniqueness}
The $q$-quantile matching experiment $\tau^\star$ is the unique experiment $\tau$ that, for each $p \in [0,1]$, implements the distribution $H_p\in \mathcal H$ given by
\begin{equation}
\begin{gathered}
H_p(x)=
\begin{cases}
\ol H(x), &x<\ul x_p,\\
p, & x\in [\ul x_p,\ol x_p),\\
\ul H(x), & x\geq \ol x_p,
\end{cases}\\
\text{where}\ \ \ul x_p=F^{-1}(qp)\ \ \text{and}\ \  \ol x_p=F^{-1}(q+(1-q)p).
\end{gathered}
\end{equation}
\end{theorem}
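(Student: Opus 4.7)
The plan is to reduce the proof to two intermediate claims: (i) the joint distribution of the extremal quantiles $\ul X(G) = \min X(G)$ and $\ol X(G) = \max X(G)$ under $\tau$ coincides with the one induced by $\tau^\star$, and (ii) conditional on $(\ul X, \ol X)$, the posterior $G$ is $\tau$-almost surely the two-atom distribution $q\delta_{\ul X} + (1-q)\delta_{\ol X}$. Together these yield $\tau = \tau^\star$.

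For (i), I would first pin down the marginals by invoking $H_0 = \ul H$ and $H_1 = \ol H$: since $\ul X$ is pointwise the smallest selection from $X$, its induced distribution has the highest CDF among implementable ones and hence, being itself in $\mathcal H = [\ul H, \ol H]$ by Theorem \ref{t:implementable}, must equal $\ol H$; symmetrically, $\ol X \sim \ul H$. Because $F$ has positive density, I can reparametrise $\ul X = F^{-1}(qU)$ and $\ol X = F^{-1}(q + (1-q)V)$ with $U$ and $V$ both uniform on $[0,1]$. The key leverage of the intermediate $H_p$'s is that each $H_p$ places no mass on $(\ul x_p, \ol x_p)$, so any implementing selection $\chi_p$ partitions the probability space $\tau$-a.s.\ into $\{\chi_p \leq \ul x_p\}$ (probability $p$) and $\{\chi_p \geq \ol x_p\}$ (probability $1-p$). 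Since $\{\chi_p \leq \ul x_p\} \subseteq \{\ul X \leq \ul x_p\}$ and both events have probability $p$, they agree up to $\tau$-null sets, and similarly $\{\chi_p \geq \ol x_p\} = \{\ol X \geq \ol x_p\}$ up to null. Translating into $(U,V)$ gives $P_\tau(U \leq p,\, V \geq p) = 0$ for every $p \in [0,1]$. Sweeping $p$ over the rationals and using countable subadditivity shows $V \leq U$ $\tau$-a.s.; together with the equal uniform marginals this forces $U = V$ $\tau$-a.s., establishing the matching joint distribution.

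For (ii), observe that on $\{\ul X < \ol X\}$ (which has full $\tau$-measure, since by (i) $\ul X = F^{-1}(qU) < F^{-1}(q + (1-q)U) = \ol X$ whenever $F$ is strictly increasing) the posterior $G$ is constant equal to $q$ on $[\ul X, \ol X)$, giving the decomposition
\[
G(c) = G(c)\,\1\{c < \ul X\} + q\,\1\{\ul X \leq c < \ol X\} + G(c)\,\1\{c \geq \ol X\}.
\]
Integrating against $\tau$ and equating with $F(c)$ via Bayes plausibility, and substituting the now-known probabilities $P_\tau(\ul X \leq c < \ol X)$ from (i), direct bookkeeping in the cases $F(c) \leq q$ and $F(c) \geq q$ forces $E_\tau[G(c)\,\1\{c < \ul X\}] = 0$ and $E_\tau[(1 - G(c))\,\1\{c \geq \ol X\}] = 0$ respectively. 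Non-negativity of $G(c)$ and $1 - G(c)$ then yields $G(c) = 0$ for $c < \ul X$ and $G(c) = 1$ for $c \geq \ol X$, $\tau$-a.s. Sweeping $c$ over the rationals and invoking right-continuity of $G$ delivers $G(\ul X^-) = 0$ and $G(\ol X) = 1$, whence $G = q\delta_{\ul X} + (1-q)\delta_{\ol X}$ $\tau$-a.s. Combined with (i), this gives $\tau = \tau^\star$.

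I expect the main obstacle to be step (i): pinning down the \emph{joint} distribution of $(\ul X, \ol X)$ genuinely requires the intermediate $H_p$'s and not merely the extremal $\ul H$ and $\ol H$, as the negative-assortative-matching example in the introduction confirms. Step (ii) is a short Bayes-plausibility accounting once the extremal pair has been fixed.
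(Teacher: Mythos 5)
Your proposal is correct, and the core strategy agrees with the paper's: both extract from the constraint that each implemented $H_p$ places no mass on $(\ul x_p, \ol x_p)$ the positive-assortative-matching structure of the atom positions, and both then identify the experiment as $\tau^\star$ by showing the posterior is almost surely a two-atom distribution. Your step (i) is essentially the paper's derivation of its equation (5): where you show $\{\ul X \le \ul x_p\}$ and $\{\ol X \le \ol x_p\}$ coincide up to null sets (then pass to $U=V$ a.s.), the paper shows $\tau(G(\ul x_p)=q, G(\ol x_p)=1)=p$ and $\tau(G(\ul x_p)=0, G(\ol x_p)=q)=1-p$ by forcing the Markov-type chain of inequalities to be tight; these are the same fact in two parametrisations. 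Your step (ii) is where the routes genuinely diverge. The paper packages both the two-atom support and the joint law of the atoms into the bivariate CDF $I_\tau$ and sandwiches $I_\tau(x_0,x_1)$ between $\hat p$ and $\hat p$ using equation (5); making the lower bound rigorous requires (implicitly) sweeping the a.s. conclusions over a countable dense set of $p$'s and invoking right-continuity of $G$ to conclude $G$ is a.s.\ a step function with values in $\{0,q,1\}$. Your Bayes-plausibility accounting---using only the marginals $\ul X \sim \ol H$, $\ol X \sim \ul H$ to force $E_\tau[G(c)\1\{c<\ul X\}]=0$ and $E_\tau[(1-G(c))\1\{c\ge\ol X\}]=0$---is a cleaner and more self-contained way to obtain that same step-function structure, and it usefully makes explicit the logical split between ``the atoms are positively matched'' and ``the posterior is concentrated on two atoms.'' One small presentational point: your argument that $\ul X \sim \ol H$ needs $\tau$ to implement $\ol H = H_1$ (so that the pointwise-minimal selection dominates a selection with CDF $\ol H$), and symmetrically $\ol X \sim \ul H$ needs $H_0 = \ul H$; you do invoke these, just be sure the pairing is stated correctly.
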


In other words, for any experiment $\tau \neq \tau^\star$, there is some $p \in [0,1]$ such that $\tau$ does not implement $H_p$. For example, in the uniform-median case, the negative assortative matching experiment does not implement $H_{1/2}$, as noted in the introduction.

An immediate corollary of Theorem \ref{t:uniqueness} is that the $q$-quantile matching experiment is the unique experiment that minimizes the maximum regret of a designer who chooses an experiment $\tau$ before learning her objective $V$, but chooses a selection $\chi$ after learning $V$. Formally, for each experiment $\tau\in \Delta(\Delta(\Theta))$ and each objective $V\in  C(\Theta)$ define the designer's \emph{regret} as
\[
r (\tau,V)=\max_{H\in \mathcal H}\int_\Theta V(x)\df H(x)-\sup_{H\in \mathcal H}\left \{\int_\Theta V(x)\df H(x):\, \text{$H$ is implemented by $\tau$} \right \}.
\]
Note that $r (\tau,V) \geq 0$ for all $\tau$ and $V$. Say that a set of possible objective functions $\mathcal V\subset C(\Theta)$ is \emph{rich} if, for all $x_0,x_1\in \Theta$, there exists a strictly quasi-convex $V\in \mathcal V$ with $V(x_0)=V(x_1)$. We then have the following result.

\begin{corollary}\label{c:regret}
If $\mathcal V$ is rich then the $q$-quantile matching experiment $\tau^\star$ is the unique experiment $\tau$ such that $r (\tau,V)=0$ for all $V\in \mathcal V$.
\end{corollary}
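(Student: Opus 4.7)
The plan is as follows. I will verify two directions: that $\tau^\star$ achieves zero regret for every $V\in C(\Theta)$, which is the easy direction, and that zero regret on a rich family of $V$'s forces $\tau=\tau^\star$. For the first direction, $\mathcal H$ is weak$^\star$ compact and $V$ is continuous, so $\max_{H\in \mathcal H}\int V\df H$ is attained at some $H^\star\in \mathcal H$; Theorem \ref{t:implementable}(2) shows that $\tau^\star$ implements $H^\star$, giving $r(\tau^\star,V)=0$. For the second direction, by Theorem \ref{t:uniqueness} it suffices to show that any such $\tau$ implements $H_p$ for every $p\in [0,1]$, which I will do by exhibiting, for each $p$, a specific $V_p\in \mathcal V$ whose unique maximizer on $\mathcal H$ is $H_p$.

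Fix $p\in [0,1]$ and invoke richness at $(x_0,x_1)=(\ul x_p,\ol x_p)$ to pick a strictly quasi-convex $V_p\in \mathcal V$ with $V_p(\ul x_p)=V_p(\ol x_p)$. Because $F$ has positive density, $\ul x_p<\ol x_p$, so strict quasi-convexity forces $V_p$ to attain its minimum at some interior point $m\in (\ul x_p,\ol x_p)$. To apply Theorem \ref{t:opt} with intervals $[\ol H^{-1}(p'),\ul H^{-1}(p')]=[F^{-1}(qp'),F^{-1}(q+(1-q)p')]$, I need to show that for $p'<p$ the unique maximizer of $V_p$ on this interval is the left endpoint, and symmetrically for $p'>p$ it is the right endpoint. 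A short case analysis does it: for $p'<p$ the left endpoint satisfies $\ol H^{-1}(p')<\ul x_p\leq m$, so $V_p(\ol H^{-1}(p'))>V_p(\ul x_p)$; and either $\ul H^{-1}(p')\geq m$, in which case $V_p(\ul H^{-1}(p'))\leq V_p(\ol x_p)=V_p(\ul x_p)$ by monotonicity of $V_p$ past $m$, or $\ul H^{-1}(p')<m$, in which case $V_p$ is strictly decreasing on the whole interval. In either subcase the left endpoint strictly dominates, and the case $p'>p$ is symmetric. Theorem \ref{t:opt} then identifies $H_p$ as the unique maximizer of $\int V_p\df H$ on $\mathcal H$.

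It remains to deduce that $r(\tau,V_p)=0$ implies $\tau$ actually implements $H_p$, not merely approximates it. Since $V_p$ is continuous and the correspondence $G\mapsto X(G)$ is measurable with compact interval values, a measurable maximum theorem supplies a measurable selection $\chi^\star(G)\in \argmax_{x\in X(G)}V_p(x)$. The induced distribution $H_{\chi^\star}$ is implemented by $\tau$ and satisfies $\int V_p\df H_{\chi^\star}=\int \max_{x\in X(G)}V_p(x)\df\tau(G)\geq \int V_p\df H$ for any $H$ implemented by $\tau$, so the supremum defining $r(\tau,V_p)$ is attained at $\chi^\star$. Hence $r(\tau,V_p)=0$ forces $\int V_p\df H_{\chi^\star}=\max_{H\in \mathcal H}\int V_p\df H$, and uniqueness of the maximizer yields $H_{\chi^\star}=H_p$. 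Running this for every $p\in [0,1]$ and appealing to Theorem \ref{t:uniqueness} gives $\tau=\tau^\star$. The case analysis in paragraph two is where the real content lies; everything else is bookkeeping built on Theorems \ref{t:implementable}--\ref{t:uniqueness}.
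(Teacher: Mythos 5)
Your proof is correct and reaches the same conclusion via Theorems~\ref{t:implementable}--\ref{t:uniqueness}, but the bridge from ``zero regret at $V_p$'' to ``$\tau$ implements $H_p$'' is genuinely different from the paper's. The paper argues by contradiction and does not show that the supremum in the definition of $r(\tau,V_p)$ is attained: it extracts a sequence $H^n$ implemented by $\tau$ with $\int V_p\,\df H^n\to\max_{\mathcal H}\int V_p\,\df H$, uses weak$^\star$ compactness and uniqueness of the maximizer to get $H^n\to H_p$ pointwise (pointwise follows from weak convergence since $H_p$ is continuous when $F$ has a density), and then runs a two-step structural argument: if $\tau$ satisfies the three structural conditions \eqref{e:1}--\eqref{e:3} then an explicit selection implements $H_p$, while if any condition fails then the Markov-type inequality blocks $H^n\to H_p$. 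Your route instead observes that the correspondence $G\mapsto X(G)$ has compact values and is weakly measurable, so the measurable maximum theorem yields $\chi^\star(G)\in\argmax_{x\in X(G)}V_p(x)$, the supremum is attained at $H_{\chi^\star}$, and $r(\tau,V_p)=0$ then forces $H_{\chi^\star}=H_p$ by uniqueness of the maximizer. This is shorter and conceptually cleaner. The only thing you have glossed over is verifying that $X(\cdot)$ is indeed weakly measurable on $\Delta(\Theta)$ with the weak$^\star$ Borel $\sigma$-algebra --- it is, since the endpoints $\inf\{x:G(x)\geq q\}$ and $\sup\{x:G(x^-)\leq q\}$ are Borel functions of $G$ (as $G\mapsto G(x)$ is upper semicontinuous and $G\mapsto G(x^-)$ is lower semicontinuous) --- but this is precisely the kind of measurability bookkeeping the paper's sequence argument is designed to sidestep. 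Your case analysis identifying $H_p$ as the unique maximizer of $\int V_p\,\df H$ matches the paper's reasoning and is correct.
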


\section{Discussion}
\label{s:discussion}

We conclude by discussing implications of our results in the gerrymandering context and mentioning some directions for future research.

A standard model of gerrymandering indexes voters by a one-dimensional partisan type $\theta \in \Theta= [\underline{\theta},\bar{\theta}]$ with population distribution $F$, and views a \emph{district} $G \in \Delta (\Theta)$ as a distribution over voter types. A \emph{districting plan} $\tau\in \Delta (\Delta (\Theta))$ is then a distribution over districts that satisfies the constraint $\int G \df \tau(G) =F$, reflecting the legal requirement that all voters must be assigned to equipopulous districts. The outcome of an election held under a districting plan then depends on the realization of an aggregate shock, whose realization $\rho$ is drawn from some distribution $R$ on $\R $. In particular, party 1 wins a district $G$ when the aggregate shock takes value $\rho$ iff $\int v(\theta,\rho) \df G(\theta) \geq 1/2$, where $v(\theta,\rho)$ is party 1's vote share among type-$\theta$ voters at aggregate shock $\rho$.\footnote{This formulation follows \citet{KW}, which generalizes the earlier models of \citet{OG}, \citet{FH}, and \citet{GP}. The connection between gerrymandering and distributions of posterior quantiles is also discussed by \citet{YangZentefis}.}

An important special case of this model arises when $v(\theta,\rho)=\mathbf{1}\{\theta \geq \rho \}$, so there is no residual uncertainty about individual voter behavior conditional on the aggregate shock. In this ``no individual uncertainty'' case (which approximates the model of \citealt{FH}), party 1 wins district $G$ at aggregate shock $\rho$ iff the median of $G$, $\chi(G)$, exceeds $\rho$. The distribution of the parties' seat shares under a districting plan $\tau$ is thus determined by the distribution of posterior medians under $\tau$: in particular, the designer wins each district $G$ with probability $R(\chi(G))$. Consequently, under \emph{partisan gerrymandering}, where the districting plan is designed to maximize party 1's expected seat share---so that the designer's utility from creating a district with a median voter of type $x$ is $V(x)=R(x)$---the resulting districting plan is $\underline{H}$, the highest implementable distribution of medians (\citealt{KW}, Proposition 3).

While partisan gerrymandering has attracted much attention, other practically relevant models include \emph{bipartisan gerrymandering}---where incumbent politicians of both parties design a districting plan to maximize the security of their own districts---and \emph{non-partisan gerrymandering}---where an independent districting commission designs a districting plan to make districts as competitive as possible.\footnote{On bipartisan gerrymandering, see, e.g., \citet{Issacharoff}, \citet{Persily}, and \citet{KonishiPan}. The objective of creating competitive districts has not been studied in the literature as far as we know, but it is an established districting criterion in several US states: see, e.g., https://www.ncsl.org/redistricting-and-census/redistricting-criteria.} The results of the current paper have implications for optimal bipartisan and non-partisan gerrymandering (without individual uncertainty). In particular, it seems natural that a bipartisan designer's utility for creating a district with a median voter of type $x$ should be quasi-convex in $x$, since districts with low or high medians are safer for the stronger party in that district: i.e.,  the stronger party's probability of winning a district with median $x$, $\max \{R(x),1-R(x)\}$, is quasi-convex in $x$. By Theorem \ref{t:opt}, optimal bipartisan districting plans are then given by \eqref{e:convex}. Conversely, a non-partisan designer's utility for creating a district with median $x$ may be quasi-concave in $x$, since districts with moderate medians are more competitive: i.e., the weaker party's probability of winning a district with median $x$, $\min \{R(x),1-R(x)\}$, is quasi-concave in $x$. Optimal non-partisan districting plans are then given by \eqref{e:concave}. More generally, Theorem \ref{t:implementable} characterizes all implementable distributions of the parties' seat shares; and Theorem \ref{t:uniqueness} shows that the only districting plan that simultaneously implements all such distributions is the median-matching plan that forms districts by pairing voter types across the median of $F$ in a positively assortative manner.

In the gerrymandering context, characterizing the set of implementable distributions of seat shares with both aggregate and individual uncertainty is an open question. More generally, this question corresponds to characterizing the set of implementable distributions of posterior statistics that interpolate between the median (the relevant statistic when $v(\theta,\rho)=\mathbf{1}\{\theta \geq \rho \}$) and the mean (the relevant statistic when $v(\theta,\rho)=\theta-\rho $). \citet{BD}, \citet{YangZentefis}, and the current paper characterize implementable distributions of posterior medians (or other quantiles); \citet{Blackwell}, \citet{Strassen}, and \citet{Kolotilin2017} characterize implementable distributions of posterior means. The problem of characterizing optimal distributions of intermediate statistics---the analogous problem to that of Theorem \ref{t:opt} in the current paper---is studied in the gerrymandering context by \citet{KW}, and in general by \citet{KCW23}. It is currently unknown whether a useful analog of Theorem \ref{t:implementable} (for quantiles) or Strassen's theorem (for means) exists for intermediate statistics.

\section{Proofs}
\label{s:proofs}

\begin{proof}[Proof of Theorem \ref{t:implementable}]
Consider any experiment $\tau\in \Delta(\Delta(\Theta))$ and any measurable selection $\chi(G)$ from $X(G)$. Let $H$ be the distribution of $\chi(G)$ induced by $\tau$. Then, for each $x\in \Theta$, we have
\begin{gather*}
	F(x)=\int G(x)\df \tau(G)=\int \1\{G(x)\geq q\}G(x)\df \tau(G)+\int \1 \{G(x)<q\}G(x)\df \tau (G)\\
	\geq \int \1\{G(x)\geq q\}q\df \tau(G)\geq \int \1 \{\chi (G)\leq x\}q\df \tau(G)=qH(x),
\end{gather*}
showing that $H\leq \ol H$. A symmetric argument shows that $H \geq \ul H$.

For the converse, we first note that the median-matching experiment $\tau^{\star}$ is  well-defined because $\int G \df \tau^{\star}(G)=F$: indeed, for all  $\theta\in \Theta$, we have
\begin{align*}
\int G(\theta)\df \tau^{\star}(G)&=\int_0^q \left (q\delta_{F^{-1}(\omega)}+(1-q)\delta_{F^{-1}(q+\frac{1-q}{q}\omega)}\right)\left(\theta\right)\frac{\df \omega}{q} \\
&=
\begin{cases}
\int_0^{F(\theta)}q\frac{\df \omega}{q}, &F(\theta)< q,\\
\int_0^{q}q\frac{\df \omega}{q} +\int_0^{\frac{q}{1-q}(F(\theta)-q)}(1-q)\frac{\df \omega}{q}, &F(\theta)\geq q,
\end{cases}\: = F(\theta),
\end{align*}
where the second equality holds because $F^{-1}(\omega)\leq \theta$ iff $\omega\leq F(\theta)$, and $F^{-1}(q+\frac{1-q}{q}\omega)\leq \theta$ iff $\omega\leq \frac{q}{1-q}(F(\theta)-q)$. Note also that, for each $\omega\in [0,q]$, the set of $q$-quantiles of $G_\omega$ is $X(G_\omega)= [F^{-1}(\omega), F^{-1}(q+ \frac{1-q}{q}\omega)]$.

Now fix a distribution $H\in \Delta(\Theta)$ satisfying $\ul H\leq H\leq \ol H$. Note that, for each $\omega\in [0,q]$, since $H\leq \ol H$, we have $H^{-1}(\frac \omega q)\geq \ol H^{-1}(\frac \omega q)=F^{-1}(\omega)$; and, since $H\geq \ul H$, we have $H^{-1}(\frac \omega q)\leq \ul H^{-1}(\frac \omega q)=F^{-1}(q+\frac{1-q}{q}\omega)$. Thus, $H^{-1}(\frac \omega q) \in X(G_\omega)$. We can therefore define a selection $\chi(G)$ from $X(G)$ by letting $\chi(G)=H^{-1}(\frac \omega q)$ in the $\tau^{\star}$-almost sure event that $G=G_\omega$ for some $\omega\in [0,q]$, and (for concreteness) letting $\chi(G)=\min X(G)$ otherwise. Finally, the distribution of $\chi(G)$ induced by $\tau^{\star}$ is $H$, because, for all $x \in \Theta$, we have
\begin{align*} 
\int \1 \{\chi(G)\leq x\}\df \tau^{\star}(G)=\int \1 \{H^{-1}(\tfrac \omega q)\leq x\}\frac{\df \omega}{q} =\int_0^{qH(x)}\frac{\df \omega}{q} = H(x).
\end{align*}

For unique implementation, assume that $F$ has a positive density on $\Theta=[\ul \theta,\ol \theta]$. Fix any $H\in \mathcal H$.  Consider a sequence of partitions of $\Theta$ given by $\theta_{i,n}=\ul \theta +(\ol \theta-\ul \theta)\frac {i}{2^n}$, with $i\in \{0,1,\dots, 2^n\}.$ Define a sequence $H_n\in \Delta (\Theta)$ by
\[
H_n(x)=H(\theta_{i-1,n})\frac{F(\theta_{i,n})-F(x)}{F(\theta_{i,n})-F(\theta_{i-1,n})}+H(\theta_{i,n})\frac{F(x)-F(\theta_{i-1,n})}{F(\theta_{i,n})-F(\theta_{i-1,n})},
\]
for all $i\in \{1,\dots,2^n\}$ and all $x\in [\theta_{i-1,n},\theta_{i,n}]$. Note that $H_n$ is well-defined, because $F$ is strictly increasing on $\Theta$. Since $H\in \mathcal H$, we have $H_n\in \mathcal H$. Moreover, $H_n$ has a simple density function $h_n$ with respect to $F$, given by
\[
h_n(x)=\frac{H(\theta_{i,n})-H(\theta_{i-1,n})}{F(\theta_{i,n})-F(\theta_{i-1,n})},
\]
for all $i\in \{1,\dots,2^n\}$ and all $x\in (\theta_{i-1,n},\theta_{i,n})$.

Next, for each $e\in (0,1]$ and each $n$, there exists an experiment $\tau^{\star}_{e,n}\in \Delta(\Delta(\Theta))$ satisfying the following two properties. First, for $\tau^{\star}_{e,n}$-almost all $G$, there exists $x\in \Theta$ such that $G=G^x$ where
\[
G^x=\frac{(1-e)h_n(x)(q\delta_{F^{-1}(qH_n(x))}+(1-q)\delta_{F^{-1}(q+(1-q)H_n(x))})+e \delta_{x}}{(1-e)h_n(x)+e}.
\]
This implies that $X(G^x)$ is the singleton $\{x\}$, because $e>0$ and $F^{-1}(qH_n(x))\leq x\leq F^{-1}(q+(1-q)H_n(x))$ (which holds because $\ul H(x)\leq H_n(x)\leq \ol H(x)$). Second, the distribution of unique quantiles $\chi(G^x)=x$ induced by $\tau^{\star}_{e,n}$ is $(1-e)H_n+eF$.

Formally, $\tau^{\star}_{e,n}$ is defined by
\begin{gather*}
\tau^{\star}_{e,n}(M)=\int_0^1 \1\left\{G^x\in M\right\}\left((1-e)h_n(x)+e\right)\df F(x), \quad \text{for all $M\subset \Delta(\Theta)$}.
\end{gather*}
Note that $\tau^{\star}_{e,n}$ is a well-defined experiment because $\int G \df \tau^{\star}_{e,n}(G)=F$. Indeed, for all  $\theta\in \Theta$, we have
\begin{align*}
\int G(\theta)\df \tau^{\star}_{e,n}(G)
&=\int_0^1 ((1-e)h_n(x)(q\delta_{F^{-1}(qH_n(x))}+(1-q)\delta_{F^{-1}(q+(1-q)H_n(x))})+e\delta_x)\left(\theta\right)\df F(x) \\
&=
(1-e)\begin{cases}
\int_0^{H_n^{-1}(\frac {F(\theta)} {q})}q\df H_n(x), &F(\theta)< q,\\
\int_0^{H_n^{-1}(1)}q\df H_n(x) +\int_0^{H_n^{-1}(\frac{F(\theta)-q}{1-q})}(1-q)\df H_n(x), &F(\theta)\geq q,
\end{cases}\\
&\quad +e\int_0^\theta \df F(x)
=F(\theta),
\end{align*}
where the second equality holds because  $F^{-1}(qH_n(x))\leq \theta$ iff $x\leq H_n^{-1}(\frac{F(\theta)}{q})$, and $F^{-1}(q+(1-q)H_n(x))\leq \theta$ iff $x\leq H_n^{-1}(\frac{F(\theta)-q}{1-q})$; and the third equality holds because $H_n(H_n^{-1}(p))=p$ for all $p\in [0,1]$, by continuity of $H_n$ (which holds because $H_n$ has a density with respect to $F$ and $F$ has a density with respect to the Lebesgue measure).
Finally, the distribution of unique quantiles $\chi(G)$ induced by $\tau^{\star}_{e,n}$ is $(1-e)H_n+eF$, because, for all $y\in \Theta$, we have
\begin{align*}
\int \1 \{\chi(G)\leq y\}\df \tau^{\star}_{e,n}(G)=\int_0^{y} ((1-e)h_n(x)+e)\df F(x) = (1-e)H_n(y)+eF(y).
\end{align*}

Now fix $V\in C(\Theta)$. By continuity of $V$ and compactness of $\Theta$, for all $\varepsilon >0$, there exists $N\in \mathbb N$ such that (i) $|V(x)-V(y)|\leq \varepsilon$ for all $x,y\in [\theta_{i-1,n},\theta_{i,n}]$, all $i\in \{1,\dots.,2^n\}$, and all $n\geq N$, and (ii) $e|V(x)-V(y)|\leq \varepsilon$ for all $x,y\in \Theta$ and all $e\in (0,\frac{1}{N}]$. Then, for all $n\geq N$ and $e\in (0,\frac{1}{N}]$, we have
\begin{gather*}
\left|\int V(x)\df H(x)-\int V(x)\df ((1-e)H_n+eF)(x)\right|\\
\leq (1-e)\left|\int V(x)\df (H-H_n)(x)\right| +e\left|\int V(x)\df (H-F)(x)\right|\leq \varepsilon+\varepsilon.
\end{gather*}
Since this holds for any $V\in C(\Theta)$, it follows that $(1-\frac{1}{n})H_n+\frac{1}{n} F$ converges weakly to $H$. In turn, since we have seen that the experiment $\tau^{\star}_{1/n,n}$ uniquely implements $(1-\frac{1}{n})H_n+\frac{1}{n} F$, we conclude that $\mathcal H^\star$ is dense in $\mathcal H$. Finally, $\mathcal H$ is compact, as $\Delta(\Theta)$ is compact by Theorem 15.11 in \citet{aliprantis2006}, and $\mathcal H$ is the intersection over $x\in \Theta$ of the closed subsets $\mathcal H_x:=\{H\in \Delta(\Theta): \ul H(x)\leq H(x)\leq \ol H(x)\}$ of $\Delta(\Theta)$. Thus, the closure of $\mathcal H^\star$ is $\mathcal H$, and hence \eqref{e:sup=max} holds for any $V\in C(\Theta)$. 
\end{proof}

\begin{proof}[Proof of Theorem \ref{t:opt}]
For each $H\in \Delta(\Theta)$, we have
\[
\int_\Theta V(x)\df H(x)= \int_0^1 V(H^{-1}(p))\df p.
\]
Recall that 
\[
J^\star(p)=\min \argmax_{}\{V(x):\, x\in [\ol H^{-1}(p),\ul H^{-1}(p)]\}, \quad \text{for all }p\in [0,1].
\]
Since $J^\star$ is defined as the minimum selection from the $\argmax$, it follows that (i) $J^\star$ is non-decreasing (and hence measurable), because $\ol H^{-1}$ and $\ul H^{-1}$ are non-decreasing, (ii) $J^\star$ is left-continuous, because $\ol H^{-1}$ and $\ul H^{-1}$ are left-continuous and $V\in C(\Theta)$, (iii) $J^\star(1)\leq \ol \theta$, because $\ul H^{-1}(1)\leq \ol \theta$, and (iv) $J^\star(0)= \ul \theta$, because $\ol H^{-1}(0)=\ul H^{-1}(0)= \ul \theta$.
This implies that $J^\star=H^{\star-1}$, where $H^{\star}\in \Delta(\Theta)$ is given by
\[
H^\star(x) =\sup\{p\in [0,1]:\, J^\star(p)\leq x\}, \quad \text{for all }x\in \Theta.
\]
Moreover, since $\ol H^{-1}\leq J^\star\leq \ul H^{-1}$, it follows that $H^\star \in \mathcal H$, so $H^\star $ solves the original problem, and its value coincides with the value of the relaxed problem, yielding \eqref{e:opt}. Consequently, $H\in \mathcal H$ maximizes $\int V(x)\df H(x)$ on $\mathcal H$ iff $H^{-1}(p)$ maximizes $V$ on $[\ol H^{-1}(p),\ul H^{-1}(p)]$ for almost all $p\in [0,1]$. Moreover, by continuity of $V$ and left-continuity of $\ol H^{-1}$ and $\ul H^{-1}$, $H^{-1}(p)$ maximizes $V$ on $[\ol H^{-1}(p),\ul H^{-1}(p)]$ for almost all $p\in [0,1]$ iff it does so for all $p\in [0,1]$.

Next, if $V$ has a unique maximum on $[\ol H^{-1}(p),\ul H^{-1}(p)]$ for almost all $p\in [0,1]$, then $J^\star$ is the unique solution of the relaxed problem that satisfies properties (i)--(iv), and hence $H^\star$ is the unique solution of the original problem. Conversely, if there exists a non-negligible set $P\subset [0,1]$ such that $V$ has multiple maxima on $[\ol H^{-1}(p),\ul H^{-1}(p)]$ for each $p\in P$, then there are multiple solutions of the relaxed problem that satisfy properties (i)--(iv). For example, $\hat J$ defined as the maximum selection from the $\argmax$
 also solves the relaxed problem, and so does $\hat J^{\star}$ defined by $\hat J^{\star}(p)=\hat J(p^-)$ for all $p\in (0,1]$ and $\hat J^{\star}(0)=\ul \theta$. But, by construction, $\hat J^{\star}$ satisfies properties (i)--(iv) and is not equal to $J^{\star}$. Then $\hat J^{\star}=\hat H^{\star-1}$ where $\hat H^{\star}\in \Delta(\Theta)$  is given by
\[
\hat H^{\star}(x) =\sup\{p\in [0,1]:\, \hat J^{\star}(p)\leq x\}, \quad \text{for all }x\in \Theta.
\]
Thus, $\hat H^{\star}\neq H^\star$ also solves the original problem.
\end{proof}

\begin{proof}[Proof of Theorem \ref{t:uniqueness}]
Suppose that an experiment $\tau\in \Delta(\Delta (\Theta))$ implements all $H_p$. Fix any $p\in [0,1]$.
Since $\tau$ implements $H_p$, there exists a measurable selection $\chi_p(G)$ from $X(G)$ such that the distribution of $\chi_p(G)$ induced by $\tau$ is $H_p$. Since $F$ has a density on $\Theta$, we have
\begin{gather*}
qp=	F(\ul x_p)=\int G(\ul x_p)\df \tau(G)=\int \1\{G(\ul x_p)\geq q\}G(\ul x_p)\df \tau(G)\\
+\int \1 \{G(\ul x_p)<q\}G(\ul x_p)\df \tau (G)\geq \int \1\{G(\ul x_p)\geq q\}q\df \tau(G)\\
	\geq \int \1 \{\chi_p (G)\leq \ul x_p\}q\df \tau(G)
	=qH_p(\ul x_p)=qp,
\end{gather*}
so all inequalities hold with equality. Thus, $\tau(G(\ul x_p)=0)=1-p$, $\tau(G(\ul x_p)=q)=p$, and $\tau(\chi_p(G)\leq \ul x_p)=p$.  A symmetric argument yields $\tau(G(\ol x_p)=q)=1-p$, $\tau(G(\ol x_p)=1)=p$, and $\tau (\chi_p(G)>\ol x_p)=1-p$. Next, since $G(\ul x_p)=0$ and $G(\ol x_p)=1$ imply that $\ul x_p<\chi_p(G)\leq \ol x_p$, it follows that $\tau(G(\ul x_p)=0,\, G(\ol x_p)=1)=0$, because
\[
\tau(\ul x_p<\chi_p(G)\leq \ol x_p)=1-\tau(\chi_p(G)\leq \ul x_p)-\tau (\chi_p(G)>\ol x_p)=1-p-(1-p)=0.
\]
So, $\tau(G(\ul x_p)=0,\, G(\ol x_p)=q)=\tau(G(\ul x_p)=0)-\tau(G(\ul x_p)=0,\, G(\ol x_p)=1)=1-p$ and $\tau(G(\ul x_p)=q,\, G(\ol x_p)=1)=\tau(G(\ol x_p)=1)-\tau(G(\ul x_p)=0,\, G(\ol x_p)=1)=p$.
In sum,
\begin{equation}\label{e:tau}
\begin{aligned}
\tau(G(\ul x_p)=0,\, G(\ol x_p)=q)&=1-p,  \\
\tau(G(\ul x_p)=q,\, G(\ol x_p)=1)&=p,	
\end{aligned}
\qquad\text{for all $p\in [0,1]$.}
\end{equation}
We now show that \eqref{e:tau} yields $\tau =\tau^\star$. Let $X_0=[\ul x_0,\ol x_0]=[\ul \theta, F^{-1}(q)]$ and $X_1=[\ul x_1,\ol x_1]=[F^{-1}(q),\ol \theta]$. For each experiment $\tilde \tau\in \Delta(\Delta(\Theta))$, define a joint distribution function $I_{\tilde \tau}:X_0\times X_1\to [0,1]$ by
\[
I_{\tilde \tau} (x_0,x_1)=\tilde \tau (G=q\delta_{\theta_0}+(1-q)\delta_{\theta_1},\, \theta_0\in [\ul x_0,x_0],\,  \theta_1\in [\ul x_1,x_1]).
\]
To prove that $\tau =\tau^\star$, it suffices to show that $I_{\tau} (x_0,x_1)=I_{\tau^\star} (x_0,x_1)$ for all $(x_0,x_1)\in X_0\times X_1$, with $I_{\tau} (\ol x_0,\ol x_1)=I_{\tau^\star} (\ol x_0,\ol x_1)=1$. Fix any $(x_0,x_1)\in X_0\times X_1$, and let $\hat p= \tfrac{F(x_0)}{q}\wedge \tfrac{F(x_1)-q}{1-q}$. First, by definition of $\tau^\star$, we have
\[
I_{\tau^\star} (x_0,x_1)=\int_0^q \1 \{F^{-1}(\omega)\leq x_0,\, F^{-1}(q+\tfrac{1-q}{q}\omega)\leq x_1\}\frac{\df \omega}{q} =\hat p,
\]
with $I_{\tau^\star} (\ol x_0,\ol x_1)=1$, because $F(\ol x_0)=F(F^{-1}(q))=q$ and $F(\ol x_1)=F(\ol \theta)=1$.
Second, by \eqref{e:tau} and definition of $I_\tau$, we have
\begin{gather*}
\hat p=\tau(G(\ul x_{\hat p})=q,\, G(\ol x_{\hat p})=1)\leq I_\tau(x_0,x_1)\leq \tau((G(\ul x_{\hat p}),G(\ol x_{\hat p}))\neq (0,q))=1-(1-\hat p),
\end{gather*}
showing that $I_\tau(x_0,x_1)=\hat p=I_{\tau^\star} (x_0,x_1)$.
\end{proof}

\begin{proof}[Proof of Corollary \ref{c:regret}]
Consider any experiment $\tau \neq \tau^\star$. By Theorem \ref{t:uniqueness}, there exists $p\in [0,1]$ such that $\tau$ does not implement $H_p$. Since $\mathcal V$ is rich, there exists a continuous and strictly quasi-convex $V\in \mathcal V$ with $V(\ul x_p)=V(\ol x_p)$. Then $\ul x_{\tilde p}$ uniquely maximizes $V$ on $[\ul x_{\tilde p},\ol x_{\tilde p}]$  for all $\tilde p\in [0,p)$, and $\ol x_{\tilde p}$ uniquely maximizes $V$ on $[\ul x_{\tilde p},\ol x_{\tilde p}]$  for all $\tilde p\in (p,1]$. By Theorem \ref{t:opt}, $H_p$ uniquely maximizes  $\int V(x)\df H(x)$ on $\mathcal H$. 

Suppose for contradiction that $r (\tau,V)=0$. Then there exists a sequence $H^n\in \mathcal H$ implemented by $\tau$ such that $\int V(x)\df H^n(x)\to \int V(x)\df H_p(x)$.
Since $\mathcal H$ is weak$^\star$ compact, passing to a subsequence if necessary, we can assume that $H^n\to\hat  H\in \mathcal H$.  Note that $\hat H=H_p$, because $H_p$ uniquely maximizes $\int V(x)\df H(x)$ on $\mathcal H$. In sum, there exists a sequence of measurable selections $\chi^n(G)$ from $X(G)$ such that $H^n(x)=\tau(\chi^n(G)\leq x)$ and $H^n(x)\to H_p(x)$ for all $x\in \Theta$.

Next, we show that $\tau$ cannot simultaneously satisfy the following three conditions
\begin{align}
\tau(G(\ul x_{\tilde p})=0)=1-\tilde p\quad &\text{and}\quad \tau(G(\ul x_{\tilde p})=q)=\tilde p,\quad \text{for all $\tilde p\in [0,p]$},\label{e:1}\\
\tau(G(\ol x_{\tilde p})=q)=1-\tilde p\quad &\text{and}\quad \tau(G(\ol x_{\tilde p})=1)=\tilde p,\quad \text{for all $\tilde p\in [p,1]$},\label{e:2}\\
\tau(G(\ul x_p)=0,\, G(\ol x_p)=q)=1-p\quad &\text{and}\quad \tau(G(\ul x_p)=q,\, G(\ol x_p)=1)=p,\label{e:3}
\end{align}
as otherwise $\tau $ would implement $H_p$. 
Indeed, if $\tau$ satisfies \eqref{e:1}--\eqref{e:3}, we can define a selection $\chi(G)$ from $X(G)$ by letting $\chi(G)=\theta_0$ if $\theta_0<\ul x_p$ and $\chi(G)=\theta_1$ if $\theta_0>\ul x_p$ in the $\tau$-almost sure event that $G=q\delta_{\theta_0} +(1-q)\delta_{\theta_1}$ for some $\theta_0\in [\ul x_0,\ol x_0]$ and $\theta_1\in [\ul x_1,\ol x_1]$. Then the distribution of $\chi(G)$ induced by $\tau$ is $H_p$, because, by \eqref{e:1}, for all $\tilde p\leq p$, we have
\begin{gather*}
\tau(\chi(G)\leq \ul x_{\tilde p}) =\tau(G(\ul x_{\tilde p})=q)=\tilde p =\tfrac{F(\ul x_{\tilde p})}{q}=H_p(\ul x_{\tilde p}),
\end{gather*}
and, by \eqref{e:2} and \eqref{e:3}, for all $\tilde p\geq p$, we have
\begin{gather*}
\tau(\chi(G)\leq \ol x_{\tilde p}) =\tau(G(\ol x_{\tilde p})=1)+\tau(G(\ul x_p)=q,\, G(\ol x_{\tilde p})=q)=\tilde p =\tfrac{F(\ol x_{\tilde p})-q}{1-q}=H_p(\ol x_{\tilde p}).
\end{gather*}

Finally, we show that if at least one of conditions \eqref{e:1}--\eqref{e:3} fails, then $H^n\nrightarrow H_p$. First, if \eqref{e:1} fails at some $\tilde p\in [0,p]$, then there exists $\varepsilon>0$ such that
\begin{align*}
F(\ul x_{\tilde p})&=\int G(\ul x_{\tilde p})\df \tau(G)=\int \1\{G(\ul x_{\tilde p})\geq q\}G(\ul x_{\tilde p})\df \tau(G)+\int \1 \{G(\ul x_{\tilde p})<q\}G(\ul x_{\tilde p})\df \tau (G)\\
	&\geq\varepsilon +\int \1\{G(\ul x_{\tilde p})\geq q\}q\df \tau(G) \geq \varepsilon +\int \1 \{\chi^n (G)\leq \ul x_{\tilde p}\}q\df \tau(G) =\varepsilon +qH^n(\ul x_{\tilde p}),
\end{align*}
so $H^n(\ul x_{\tilde p})\nrightarrow H_p(\ul x_{\tilde p})$. Similarly, if \eqref{e:2} fails at some $\tilde p\in [p,1]$, then $H^n(\ol x_{\tilde p})\nrightarrow H_p(\ol x_{\tilde p})$. Finally, if \eqref{e:1} and \eqref{e:2} hold, but \eqref{e:3} fails, then there exists $\varepsilon>0$ such that
\begin{align*}
H^n(\ol x_p)-H^n(\ul x_p)&=\tau (\ul x_p<\chi^n(G)\leq \ol x_p)\geq \tau (G(\ul x_p)=0,\, G(\ol x_p)=1)\\
&\geq \varepsilon>0=H^p(\ol x_p)-H^p(\ul x_p),	
\end{align*}
so $H^n(\ul x_{p})\nrightarrow H_p(\ul x_{p})$ or $H^n(\ol x_{p})\nrightarrow H_p(\ol x_{p})$.
\end{proof}

\bibliographystyle{econometrica}
\bibliography{persuasionlit}

\end{document}